\DeclarePairedDelimiter\floor{\lfloor}{\rfloor}
\newcommand{\E}{\mathbb{E}}
\newtheorem{theorem}{Theorem}
\newtheorem{proposition}{Proposition}
\newtheorem{definition}{Definition}
\begin{document}
\title{Anonymity Mixes as (Partial) Assembly Queues:\\
Modeling and Analysis}

\author{%
  \IEEEauthorblockN{Mehmet Fatih Akta\c{s} and Emina Soljanin}
  \IEEEauthorblockA{Department of Electrical and Computer Engineering, Rutgers University \\
Email: \{mehmet.aktas, emina.soljanin\}@rutgers.edu}
}
\maketitle

\begin{abstract}
  Anonymity platforms route the traffic over a network of special routers that are known as mixes and implement various traffic disruption techniques to hide the communicating users' identities.
  Batch mixes in particular anonymize communicating peers by allowing message exchange to take place only after a sufficient number of messages (a batch) accumulate, thus introducing delay.
  We introduce a queueing model for batch mix and study its delay properties.
  Our analysis shows that delay of a batch mix grows quickly as the batch size gets close to the number of senders connected to the mix.
  We then propose a randomized batch mixing strategy and show that it achieves much better delay scaling in terms of the batch size. However, randomization is shown to reduce the anonymity preserving capabilities of the mix.
  We also observe that queueing models are particularly useful to study anonymity metrics that are more practically relevant such as the time-to-deanonymize metric.
\end{abstract}

\begin{IEEEkeywords}
Chaum mixes, Delay analysis, Queueing Theory, Order statistics.
\end{IEEEkeywords}

\section{Introduction}
In numerous circumstances, more than just the content of a message has to be hidden from the adversary. 
Unlike covertness which aims to deny that any communication is taking place \cite{DeniableComm:KadheJB14, HidingInformationInNoise:BashGT15, CovertComm:Bloch16}, we consider the case where it is known that a group of peers communicate but it is desired to hide who is communicating with whom \cite{TorlikeAnonymityInWirelessNets:SafakaCA15}.
It is well known that identities of peers communicating over a network can be identified via rather simple network traffic analysis techniques \cite{NetsWithUserObservability:PfitzmannW85}.
Anonymity mixes were introduced by David Chaum in 1980's as a general framework for implementing anonymous message exchange \cite{Mixes:Chaum81}.
They are sophisticated network routers that pass messages such that no one (except the mix itself) can link an ingoing message to an outgoing message. Today, some form of a mix is often a part of anonymity preserving solutions (e.g., PetMail, Mixminion, Panoramix) or data transfer services (e.g., Onion routing, Freenet).

A mix typically collects messages and forwards them in batches according to a fixed deterministic rule or a randomized strategy (see e.g., \cite{BatchMix:KesdoganAP02, DummyMessageMix:BertholdL02, Mix:diazS03}). This allows hiding the origin of the outgoing messages, but also introduces delay in message transfer.
The incurred delay of the mixes is the most concerning cost of anonymity they provide. 
For instance anonymous web browsing platform ToR, which currently has more than 2 million users, does not implement sophisticated mixing to keep a low latency platform, even though it is shown to be vulnerable to deanonymization attacks based on network traffic analysis \cite{ChallengesDeployingTor:DingledineMS05, Tor:dingledineMS07,TorTraffAnalysis:ChakravartyBP14}.

Appropriate modeling of the mixes is crucial to study their delay vs.\ anonymity tradeoff.
Stability conditions and delay characteristics of a mix naturally depend on its system parameters which also determine its anonymity preserving capabilities.
In this paper, we propose and study two queueing models for batch mixes that are designed and used against passive adversarial attacks.
Note that, we do not consider active attacks that involve traffic injection into the network, which have also been shown to successfully deanonymize users on popular anonymity platforms \cite{SampledTrafficAnalysis:MurdochZ07, TorTraffAnalysis:ChakravartyBP14}.

We propose a mix model that implements the well known deterministic \textit{batch} mixing algorithm \cite{BatchMix:KesdoganAP02}.
We observe the close connection of the model to assembly queues, which was used to model and study the operational process of assembling multiple items into a product \cite{AssemblyQs:Harrison73}.
Using the proposed model, we find that batch mix provides a well defined anonymity guarantee that gets better with the batch size, on the other hand, its incurred delay grows quickly as the batch size gets close to the number of senders connected to the mix.

Our study of the batch mix led us to consider a new randomized mixing algorithm.
We show that the randomized model achieves better delay scaling in terms of the batch size compared to its deterministic counterpart. However, it can provably preserve anonymity \emph{only if} the adversary can not infer the state of the mix, and is in general vulnerable to anonymity attacks under low traffic.

There are many measures of anonymity and privacy (see \cite{Unlinkability:PfitzmannH05, InfoLeakage:IssaKW16}). We are here concerned with preserving \emph{unlinkability}, which is ensuring that no sender/receiver pair is exposed. Our study shows that delay of the mix can be reduced by sacrificing some anonymity, which would eventually lead to complete deanonymization of all the sender-receiver pairs.
However, 
message transfer sessions are of finite duration in practice, and minimum amount of time required for an attack to destroy anonymity is a concern regardless of the anonymity measure.
Previous papers that are concerned with the delay of anonymity schemes ignore the queueing dynamics within the mix (see e.g., \cite{DelayAnonymityTradeoffInMixNetworks:JavidbakhtV17, AnonymousNetworkingWithMinimumLatency:VenkitasubramaniamT08} and references therein).
We believe that queueing models are necessary for studying the time-to-deanonymize metric, and this paper is a first step towards understanding this metric.

This paper is organized as follows.
Sec.~\ref{sec:sec_batchmix_model} describes the batch mix model.
Sec.~\ref{sec:sec_batchmix_anonymity} presents the anonymity guarantee implemented by a batch mix.
Sec.~\ref{sec:sec_batchmix_stability_delay} shows a stability criterion for the batch mix and presents an approximate method for analyzing its incurred delay.
Sec.~\ref{sec:sec_samplingmix} introduces a randomized batch mixing strategy, and discusses its anonymity and delay properties. Sec.~\ref{sec:sec_conclusion} gives a summary and conclusions.

\section{A Batch Mix  and its Anonymity}
\subsection{Mix Model}
\label{sec:sec_batchmix_model}
A batch mix has $n$ senders connected, and buffers the messages incoming from each sender in a separate first-in first-out queue with an infinite buffer space. As soon as any $k \geq 2$ queues become non-empty, one message from each is dispatched (see Fig.~\ref{fig:fig_43_mix}).
The recipient sets of each sender are assumed to be disjoint and of at most size $m$.

\begin{figure}[t]
  \centering
  \includegraphics[width=0.4\textwidth, keepaspectratio=true]{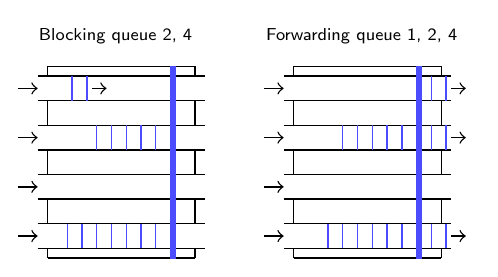}
  \caption{Illustration of a $(4, 3)$ batch mix. As long as there are less than three non-empty queues, messages are blocked (Left). As soon as a message arrival forms a group of three non-empty queues, one message from each is dispatched (Right).}
  \label{fig:fig_43_mix}
\end{figure}

Each sender is assumed to generate an independent Poisson
message traffic at rate $\lambda$.
Delay added by the mix is assumed to come only from the message queueing time. We ignore any message reception or transmission delay.

\begin{definition} 
  An \ul{$(n, k)$ batch mix} is a system of $n$ first-in first-out queues, each receiving messages from an independent Poisson process of rate $\lambda$.
  Messages are blocked as long as the mix has less than $k$ non-empty queues. As soon as $k$ queues become non-empty, one message from each is dispatched.
\end{definition}

\subsection{Attack Model and Anonymity}

\label{sec:sec_batchmix_anonymity}
 We assume that the adversary monitoring the traffic going in and out of the mix can observe 1) who the sender of each incoming message is and 2) who the recipient of each outgoing message is. Thus, if a message arrival triggers a message departure, the adversary can identify the sender-receiver pair. His goal is to identify the receivers of a particular sender, which we refer to as the \textit{target sender}.

Forwarding messages in batches of size $k$ prevents the adversary from immediately finding out the exact destination of an incoming message, as it can be any of the $k$ message recipients. However, the adversary can, over time, collect multiple size-$k$ receiver sets, each containing a potential recipient of the target sender. Intersecting such sets would eventually reveal the receivers linked to the target sender. We refer to such attacks as \textit{intersection attacks} \cite{IntersectionAttack:WrightAL02}.





We say that the mix preserves anonymity, when it ensures that no sender/receiver pair is exposed, that is, no sender and receiver can be linked.

\begin{theorem}[Anonymity under intersection attack]
  Consider a target sender connected to an $(n, k)$ batch mix that is under intersection attacks.
  When $k < n$, all $m$ receivers of the target can be identified if $m \leq \floor*{n/k}$.
  All $m$ receivers cannot be identified surely otherwise.
\label{thm_batchmix_anonymity}
\end{theorem}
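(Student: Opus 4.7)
The plan is to separate the theorem into its two parts, using the disjointness of the senders' receiver sets throughout.

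I would begin by setting up the adversary's view. Since receiver sets are disjoint, every batch in which the target participates contains exactly one receiver from the target's receiver set and $k-1$ receivers from the other $k-1$ senders' sets. Under the Poisson model, when $k<n$, symmetry implies that every $(k-1)$-subset of the $n-1$ non-target senders accompanies the target in batches at positive rate, so over a long observation window the adversary witnesses all co-occurrence patterns.

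For the identifiability claim ($m\le\floor*{n/k}$), I would construct the attack explicitly. The adversary partitions the observed target-batches into groups according to which of the target's $m$ receivers is the recipient in that batch (the grouping is initially unknown, but the attacker enumerates all candidate groupings) and then intersects the $k$-element receiver sets within each group. In the correct grouping each intersection retains only the single target receiver the group is built around: by Poisson symmetry no specific non-target sender can appear in every batch of a group, so non-target receivers are peeled away. The condition $mk\le n$ is what makes the correct grouping the unique one that yields $m$ nonempty singleton intersections; intuitively, there is enough capacity among the $n-1$ non-target senders to accommodate $m(k-1)$ slots spread across $m$ disjoint groups.

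For the non-identifiability claim ($m>\floor*{n/k}$), I would exhibit an alternative candidate set $R'$ distinct from the true set $R$ of the target's receivers that is statistically indistinguishable from $R$ under the adversary's observations. With $mk>n$, a pigeonhole argument on the $m(k-1)$ non-target slots in $m$ target-batches versus the $n-1$ available non-target senders forces at least one non-target sender to reappear across these batches; substituting its receiver for one element of $R$ produces a consistent alternative. I would then invoke the Poisson symmetry to argue that the joint distribution of the observed batch sequence is invariant under this substitution, so the adversary cannot surely prefer $R$ to $R'$.

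The main obstacle is the non-identifiability direction: the alternative $R'$ must be consistent not only with the specific $m$ batches used in the pigeonhole argument but with every target-batch observed over the infinite time horizon. My plan is to bypass a case-by-case consistency check by exhibiting $R'$ through a measure-preserving bijection on the space of sample paths---swapping a receiver label in $R$ with a cleverly chosen non-target receiver---so that the adversary's observation stream has identical distribution under $R$ and under $R'$.
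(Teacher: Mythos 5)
The fatal problem is in your converse direction. The theorem's non-identifiability claim is relative to the restricted class of \emph{intersection attacks}, and the paper's proof argues only within that class: such an attack must first isolate each of the target's $m$ receivers in one of $m$ mutually disjoint size-$k$ recipient sets, which is combinatorially impossible once $km > n$. You instead set out to prove genuine statistical indistinguishability --- a measure-preserving relabeling taking the true receiver set $R$ to an alternative $R'$ under which the adversary's observation stream has the same law. No such $R'$ exists. Because the senders' recipient sets are disjoint and the adversary sees the sender of every incoming message, any receiver that ever appears in a batch \emph{not} involving the target is thereby excluded from $R$; with $k<n$ such batches occur at positive rate, and over an infinite horizon every non-target receiver almost surely appears in one. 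The target's receivers are then pinned down exactly by elimination, so an unrestricted adversary identifies $R$ regardless of whether $m \leq \floor*{n/k}$. Your proposed swap of a target receiver with a receiver of some non-target sender $s'$ is detectable the first time $s'$ participates in a batch without the target (or vice versa). Separately, your pigeonhole count is off: $mk>n$ gives $m(k-1) > n-m$, not $m(k-1) > n-1$, so for $m\geq 2$ it does not force a repeated non-target sender among the $m$ batches.

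The forward direction is closer to the paper in spirit --- both rest on intersecting recipient sets and on $km\leq n$ --- but your attack (enumerate all groupings of the target-batches by hypothesized receiver, intersect within groups, and declare the correct grouping to be the unique one yielding $m$ singletons) replaces the paper's concrete two-phase procedure (collect $m$ mutually disjoint recipient sets, then refine each only with newly observed sets that intersect exactly one of them) with an unproven uniqueness claim. That uniqueness is precisely the hard combinatorial content and does not follow from $km\leq n$ plus ``Poisson symmetry'' as asserted; you would need to rule out spurious groupings that also collapse to singletons. I recommend restructuring the forward direction around the disjoint-sets-then-refine attack and stating the converse explicitly as a limitation of intersection attacks rather than as an information-theoretic impossibility.
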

\begin{proof}
  This theorem is a reformulation of \cite[Claim~1]{BatchMix:KesdoganAP02}.
  Let adversary wait and observe $m$ \emph{mutually disjoint} sets $R_1, \ldots, R_m$ of size $k$ that include the possible receivers of Alice. These $m$ sets can be disjoint only if $k m \leq n$. Adversary is thus sure that there is exactly one receiver of Alice in each observed recipient set $R_i$. Afterwards, adversary refines each of these sets by observing new recipient sets that intersect with only one of the prior sets. This means, a new recipient set $R$ is useful if $R \cap R_i \neq \emptyset$ and $R \cap R_j = \emptyset$ for all $j \neq i$, then $R_i$ can be refined to $R \cap R_i$. 
  Note that if $R$ intersects with multiple prior recipient sets, then refining all intersecting sets  may remove the actual receivers of Alice.
  The correct refinement process can be continued until each of the sets $R_1, \ldots, R_m$ contains only one receiver. Remaining $m$ receivers in the refined recipient sets are clearly the communication partners of Alice.
  
  As described above, intersection attacks will surely identify all receivers of a target only if adversary can observe $m$ \emph{disjoint} sets of size $k$. This is the only way for adversary to isolate each receiver of the target in a separate set so that a newly observed set can be intersected with one of these sets \emph{correctly}, that is, intersection will not surely end up removing the true receiver from the set.
  When $k m>n$, adversary can never observe $m$ disjoint sets of size $k$, hence can never surely identify all $m$ receivers of the target.
\end{proof}

\noindent
\section{Stability and Delay}
\label{sec:sec_batchmix_stability_delay}
A batch mix consists of $n$ FIFO queues, each buffering messages arriving from an i.i.d.\ Poisson process. A message arrival triggers a batch departure if it finds $k-1$ other non-empty queues in the mix, and the arriving message departs immediately with the batch. Therefore, there can be at most $k-1$ non-empty queues in the mix at any time. Since all the queues and the associated arrival processes are identical, system state can be represented as the Markov process $L(t) = (l_1(t), \dots, l_{k-1}(t))$ where $l_i(t)$ denotes the length of the $i$th longest queue in the system at time $t$.

An $(n, n)$ batch mix behaves as an assembly queue, found  to be unstable in \cite{AssemblyQs:Harrison73}.
Stability here refers to the existence of an invariant probability measure for the system state process.

\begin{theorem}
  An $(n, k)$ batch mix is stable if $k < n$.
\label{thm_nkmix_isstable}
\end{theorem}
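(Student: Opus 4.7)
The plan is to prove positive recurrence of the Markov chain $L(t)=(l_1(t),\dots,l_{k-1}(t))$ by a Foster--Lyapunov drift argument with the Lyapunov function $V(L)=\sum_{i=1}^{k-1}l_i$, the total number of messages currently buffered in the mix. Irreducibility on the countable ordered state space is immediate from the Poisson arrival structure: from any state there is a positive-probability sequence of arrivals that fills $k$ designated queues and triggers dispatches, and iterating this drives the chain to the origin.

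The first step is to compute the infinitesimal drift of $V$. Let $J(L)\leq k-1$ denote the number of non-empty queues. A case analysis on where the next Poisson arrival lands gives
\[
  \mathcal{A}V(L)=\begin{cases} n\lambda, & J(L)<k-1,\\ -\lambda(k-1)(n-k), & J(L)=k-1,\end{cases}
\]
where the second line uses that in the productive state $J=k-1$ an arrival to one of the $k-1$ non-empty queues (rate $(k-1)\lambda$) adds a single message, while an arrival to one of the $n-k+1$ empty queues (rate $(n-k+1)\lambda$) triggers a dispatch that removes $k-1$ messages net of the arrival. The drift on $\{J=k-1\}$ is strictly negative precisely under the hypothesis $k<n$, and this is essentially the whole reason the theorem should hold.

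Because $\mathcal{A}V$ is positive on the unproductive set $\{J<k-1\}$, the single-step Foster criterion does not close directly; I would therefore apply the multi-step version and exhibit $T,\epsilon>0$ and a finite exceptional set $\mathcal{C}$ with $\mathbb{E}[V(L(T))-V(L(0))\mid L(0)=l]\leq-\epsilon$ for $l\notin\mathcal{C}$. The key supporting lemma is that the hitting time of $\{J=k-1\}$ from any initial state is stochastically dominated by $\sum_{j=0}^{k-2}\Exp((n-j)\lambda)$, and therefore has expectation bounded by $(k-1)/((n-k+2)\lambda)$ uniformly in the initial value of $V$. Consequently, over a window $[0,T]$ the chain spends a $1-o_T(1)$ fraction of time in $\{J=k-1\}$, where the drift of $V$ is the fixed negative quantity $-\lambda(k-1)(n-k)$, while each excursion into $\{J<k-1\}$ contributes only an $O(1)$ expected increment to $V$ before $J$ climbs back to $k-1$; enlarging $T$ therefore forces the expected net change of $V$ to be strictly negative outside a finite set.

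The main technical obstacle will be the book-keeping for these excursions, since a single dispatch can empty up to $k-1$ queues simultaneously and so briefly drop $J$ well below $k-1$, and one must control both the frequency of such drops and the accumulated arrivals during them. If a direct count of excursions turns out to be unwieldy, the alternative I would pursue is a fluid-limit argument: after the usual hydrodynamic rescaling, each non-empty queue drains at the strictly positive net rate $\lambda(n-k)$, so every fluid trajectory reaches the origin in finite time, and positive recurrence of the original stochastic mix then follows from the standard fluid-model stability theorem.
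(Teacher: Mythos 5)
Your drift computations are correct and $V(L)=\sum_i l_i$ is a natural Lyapunov function, but note first that this is a genuinely different route from the paper, which applies a one-step Foster--Lyapunov criterion to the embedded jump chain with $W(s)=s_{k-1}^{\log_2(n/(n-1))}$, a fractional power of the \emph{shortest} of the $k-1$ tracked queue lengths. The real problem is a gap in your multi-step argument: the claim that the chain spends a $1-o_T(1)$ fraction of time in $\{J=k-1\}$ when $V$ is large is false. Every dispatch can empty some of the $k-1$ non-empty queues and knock $J$ back below $k-1$, and dispatches occur at rate $\Theta(\lambda)$, so the occupation fraction of $\{J=k-1\}$ stays bounded away from $1$ uniformly in $V$. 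Concretely, for $(n,k)=(4,3)$ started with one very long queue and the rest empty, the three short queues evolve exactly as a $(3,2)$ batch mix, and by the paper's own exact analysis of the $(n,2)$ case they contain a non-empty queue only a fraction $1-p_0=3/4$ of the time; hence $\Pr\{J=2\}=3/4$, not $1-o_T(1)$. Your excursion bookkeeping then yields a net change over $[0,T]$ of $-\lambda(k-1)(n-k)\cdot\E[\text{time in }\{J=k-1\}]+O(1)\cdot\E[\#\text{excursions}]$, and since both terms are $\Theta(T)$, the sign is not determined by the crude bounds you propose (each excursion costs $O(1)$, each unit of good time earns a fixed negative amount). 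The step you defer as ``book-keeping'' is therefore the actual mathematical content of the theorem.

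What closes the argument---and what your fluid-limit alternative also silently assumes when you assert the drain rate $\lambda(n-k)$---is a quasi-stationary throughput computation. Since $\E[V(L(T))-V(L(0))]=n\lambda T-k\,\E[D(T)]$, where $D(T)$ is the number of dispatches in $[0,T]$, negative drift is precisely the statement that the dispatch rate exceeds $n\lambda/k$. When $j\le k-1$ queues are large, the remaining $n-j$ queues behave as an $(n-j,k-j)$ batch mix whose stationary dispatch rate is $(n-j)\lambda/(k-j)$, and $(n-j)/(k-j)>n/k$ holds if and only if $k<n$; this is where the hypothesis genuinely enters, and making it rigorous requires an induction on $j$ (stability and throughput of the smaller mixes established first), or an equivalent uniform lower bound on the occupation measure of $\{J=k-1\}$. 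So your skeleton is sound and the hypothesis appears in the right place, but the mechanism you propose for the decisive step is not the correct one, and as written the proof does not go through.
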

\begin{proof}
  A Markov process is stable if and only if it is positive recurrent.
  Given that transition rates of $L(t)$ are neither too ``slow'' nor too ``fast'', its positive recurrence is implied by the positive recurrence of its embedded discrete chain $S_t$.
  We here use the Foster-Lyapunov criterion to show the positive recurrence of $S_t$ as interpreted from \cite[Thm.~2]{MarkovChainStability:FossK04}.
  
  For system state $s = (s_1, \dots, s_{k-1})$, let
  \[ W(s) \coloneqq s_{k-1}^{\log_2(n/n-1)}. \]
  Recall that $s_{k-1} = \min\{s_i, ~i=1, \dots, k-1\}$.
  
  Note that $\sup_s W(s) = \infty$ as required.
  One step drift for any state $s \in \{s, ~W(s) > 0\}$ is
  \[ \E\left[W(S_1) - W(S_0) \mid S_0=s\right] < 0. \]
  and we have
  \[ \sup_{\{s,\, W(s) \leq 0\}} \E[S_1 \mid S_0=s] < 1 < \infty. \]
  Thus $S_t$, hence $L(t)$ is positive recurrent.
\end{proof}

\begin{figure*}[t]
  \centering
  \begin{subfigure}[]{.32\textwidth}
    \centering
    \includegraphics[width=1\textwidth, keepaspectratio=true]{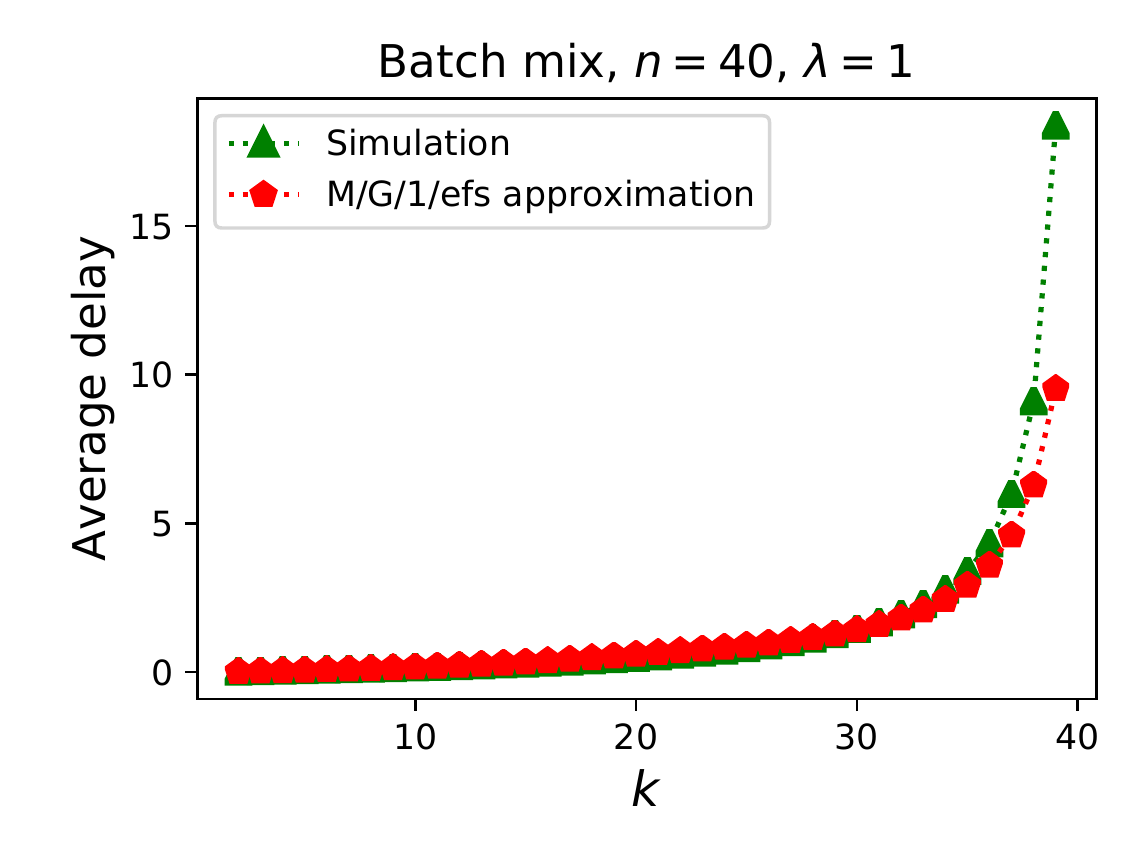}
  \end{subfigure}
  \begin{subfigure}[]{.32\textwidth}
    \centering
    \includegraphics[width=1\textwidth, keepaspectratio=true]{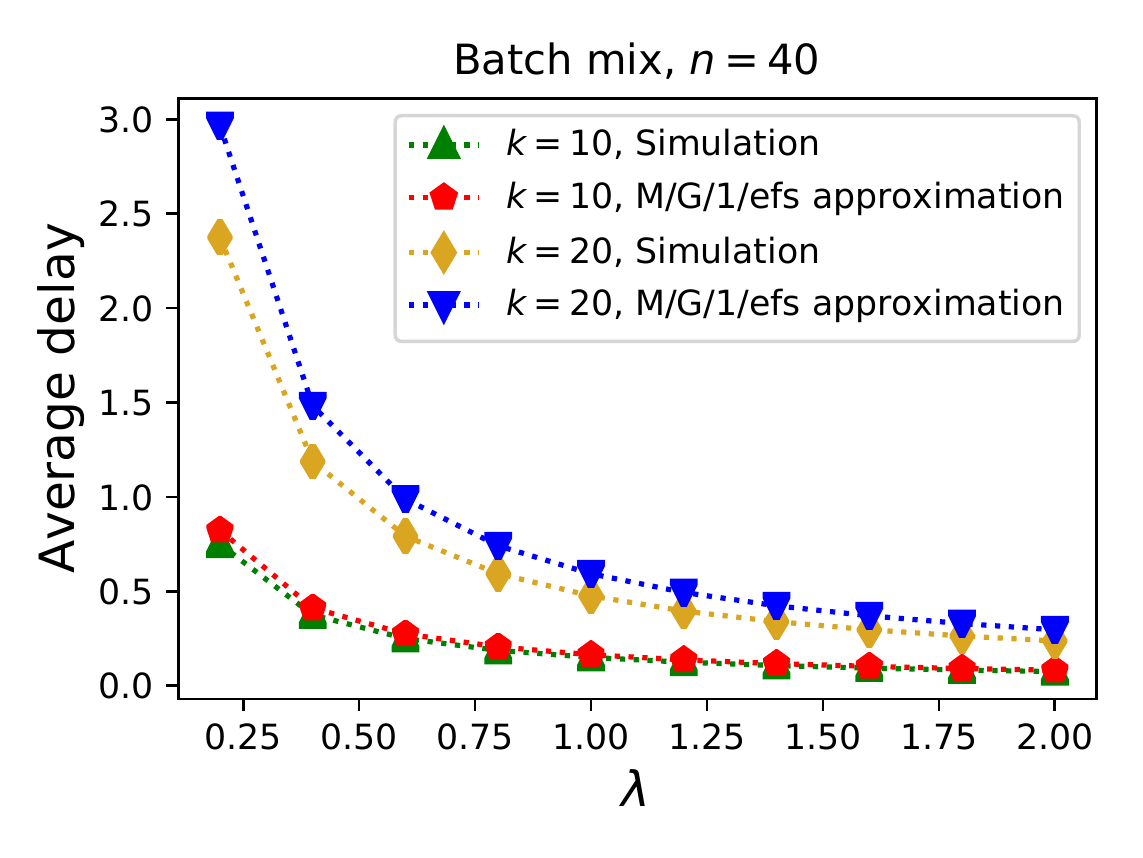}
  \end{subfigure}
  \begin{subfigure}[]{.32\textwidth}
    \centering
    \includegraphics[width=1\textwidth, keepaspectratio=true]{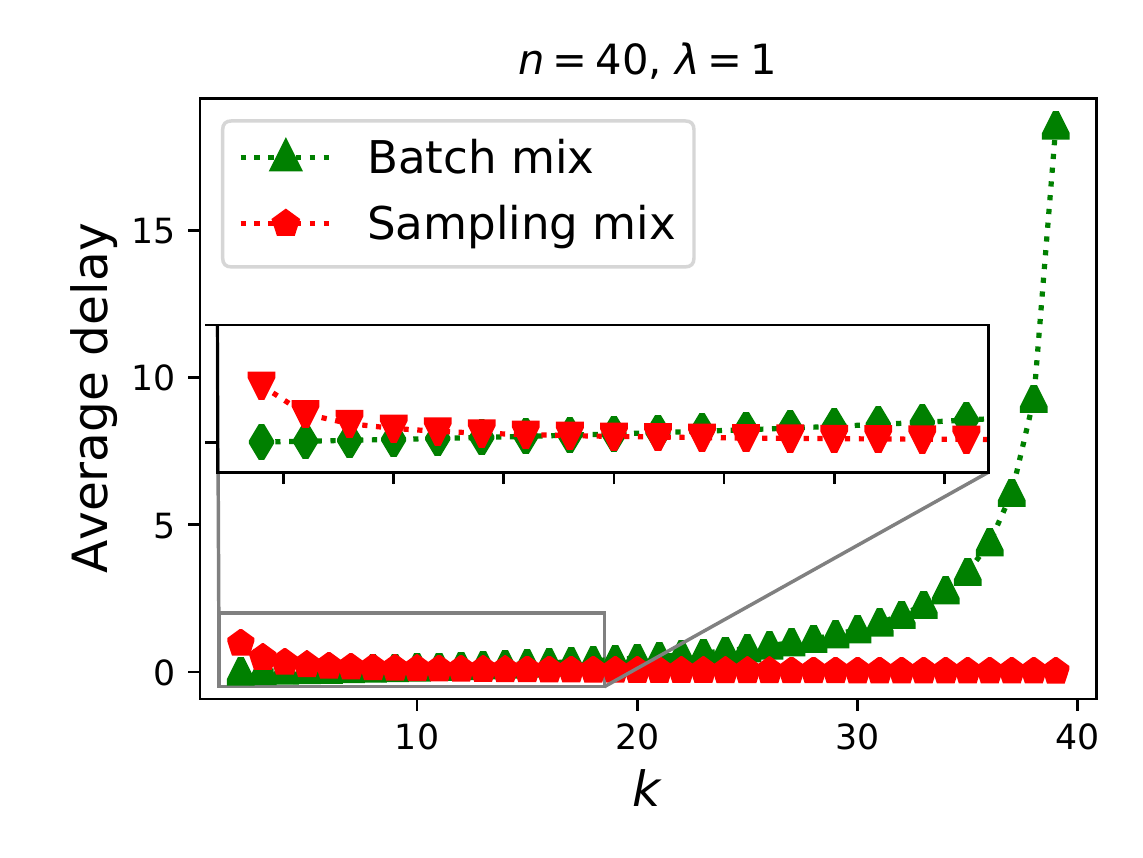}
  \end{subfigure}
  \caption{(Left, Middle) Average delay in a batch mix with $n=40$; Left: fixed $\lambda$ and varying $k$, Middle: fixed $k$ and varying $\lambda$. (Right) Comparison between the average delay in a batch mix and in a sampling mix with $p_a = 1/n$.}
  \label{fig:plot_ED_sim_vs_approx}
\end{figure*}

\noindent
There are three scenarios that a message can experience upon arrival to the mix:
1) If a message arrives to an empty queue and finds $k-1$ other non-empty queues in the mix, then it will immediately depart with no queueing.
2) If a message arrives to an empty queue and finds fewer than $k-1$ other non-empty queues in the mix, then it has to wait for a formation of $k$ non-empty queues (i.e., a batch).
3) If a message arrives to a non-empty queue, it has to first wait to the HoL (head of the line) in its queue, and then wait for the next batch formation.

In a tagged queue, batch formation delay experienced by a message is completely characterized by the number of non-empty queues $R$ (excluding the tagged queue) seen by the message once it moves to HoL. If $R < k-1$, message will be blocked until any $k-1-R$ of the $n-1-R$ empty queues receive at least a message.
Using the memoryless property of message inter-arrival times, batch formation delay is distributed as the $(k-1-R)$th order statistic of $n-1-R$ i.i.d.\ exponentials, which we denote as $X_{n-1-R:k-1-R}$\footnote{$X_{i:j} \coloneqq 0$ if $i < j$ or $j = 0$.}.
Overall, a message moving to HoL may find from $0$ up to $k-1$ other non-empty queues, hence there are $k$ possible different distributions for the batch formation delay.

When $k=2$, system state is just the longest queue length and defines a birth-death process. Exact analysis is formidable when $k > 2$ because of the infamous state space explosion problem. We first present the exact analysis for $k=2$, then present an approximate method for $k > 2$, which is similar to an approximation presented for assembly queues in \cite{AssemblyQs:LipperS86}.

\subsection{Exact analysis of $(n, 2)$-mix}
\label{subsec:subsec_n_2}
In $(n, 2)$-mix for $n > 2$, there can be at most one non-empty queue at any time, hence the system state is captured by the length of the longest queue $\mathcal{L}(t)$. It defines a single dimensional birth-death Markov process as shown in Fig.~\ref{fig:fig_n_2_markovprocess}.
\begin{figure}[t]
  \centering
  \includegraphics[width=0.35\textwidth, keepaspectratio=true]{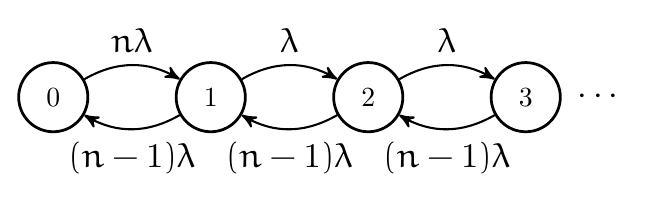}
  \caption{Markov process for $(n, 2)$-mix. State here denotes the length of the longest queue in the mix; $0$ length means the system is empty.}
  \label{fig:fig_n_2_markovprocess}
\end{figure}

Exact analysis in this case is straightforward. Let $p_l$ be the stationary probability for state $l$. From global balance equations we find
\[ p_0 = \frac{n-2}{2(n-1)} \quad\quad p_l = \frac{n(n-2)}{2(n-1)^{i+1}};~l=1,2,\ldots \]
Ergodicity implies that fraction of the time an arbitrary queue is non-empty (i.e., average load on the queue) is $\rho = (1-p_0)/n = 1/2(n-2)$. Larger $n$ gives higher frequency of emptiness at the servers, which is natural since queues empty out faster when the mix receives messages at a higher rate.

Using the stationary state probabilities, first two moments of the length of an arbitrary queue are given as
\[
\begin{split}
  \E[L] &= \frac{1}{n}\sum_{l=1}^{\infty} l\; p_l = \frac{1}{2(n-2)} \\
  \E[L^2] &= \frac{1}{n}\sum_{l=1}^{\infty} l^2\; p_l = \frac{n}{2(n-2)^2}
\end{split}
\]

We next derive some simple conclusions for the steady state delay experienced by an arriving message. Using PASTA \cite{Pasta:Wolff82}, an arbitrary message finds the system empty with probability $p_0$ and will have to wait for the first arrival to one of the other $n-1$ queues. Since arrivals are Poisson, waiting time distribution for the message is minimum of $n-1$ $\mathrm{Exp}(\lambda)$'s, that is $\mathrm{Exp}\left((n-1)\lambda\right)$. An arriving message may find its corresponding queue with $l$ messages with probability $p_l/n$. In this case, waiting time distribution for the message is sum of $l+1$ independent $\mathrm{Exp}\left((n-1)\lambda\right)$'s, which is $\mathrm{Erlang}\left(l+1, (n-1)\lambda\right)$. Finally, it may also find its corresponding queue empty with probability $p_l(n-1)/n$ if there is another queue with $l$ messages. Then the message will not be queued and will depart immediately upon arrival together with the first message in the busy queue. Using the law of total probability, distribution of waiting time $D$ for an arbitrary message is then given as
\begin{equation*}
\begin{split}
  \Pr\{D > w\} &= p_0\; \Pr\left\{\mathrm{Exp}\left((n-1)\lambda\right) > w\right\} \\
  &+ \frac{1}{n}\sum_{l=1}^{\infty} p_l\; \Pr\left\{\mathrm{Erlang}\left(l+1, (n-1)\lambda\right) > w\right\}.
\end{split}
\end{equation*}


\subsection{Approximate analysis of $(n, k>2)$-mix}
\label{subsec:subsec_n>2}
We here adopt the following approximating assumption;
a message \emph{upon moving to head of the line (HoL)} in its queue finds each other queue independently non-empty with probability $p$. Given that, and the fact that there can be at most $k-1$ non-empty queues at any time, the number of non-empty queues seen by a message moving to HoL is distributed as $R \sim B | \{B \leq k-1\}$ where $B \sim \mathrm{Binom}(n-1, p)$.
Given $R = r$, message will have to wait before getting dispatched for the first $k-1-r$ among all the $n-1-r$ empty queues to receive at least one arrival, that is, the message will experience a batch formation delay of $V|\{R=r\} \sim X_{n-1-r:k-1-r}$.
Then $V$ for an arbitrary message, which \emph{arrives to a non-empty queue in the first place}, is approximately distributed as
\begin{equation}
  \Pr\{V \leq v\} = \E_R\left[\Pr\{X_{n-1-R:k-1-R} \leq v\}\right],
  \label{eq:eq_MG1efs_Vcdf}
\end{equation}
where $\E_R$ denotes expectation with respect to $R$.

Batch formation delay for messages that arrive to an empty queue is differently distributed (than $V$ above) because they find each other queue non-empty with a different probability than messages that arrive to a non-empty queue.
Let a tagged queue be left empty by a departing message $m$. If the queue was left non-empty, the next message in line would have immediately moved to HoL. Then according to our earlier assumption, the number of non-empty queues left behind non-empty by $m$ is distributed as $R$.
Including the next arrival to the tagged queue, say message $m^+$, the next batch formation requires $k-R$ of the $n-R$ empty queues to receive at least an arrival.
Given that $m^+$ is among these first $k-R$ arrivals and messages are generated from i.i.d. streams, probability that $m^+$ is the $j$th among the $k-R$ arrivals is $1/(k-R)$.
Thus, batch formation delay experienced by a message arriving to an empty queue is approximately distributed as
\begin{equation}
  \Pr\{V_e \leq v\} = \E_R\left[\sum_{j=1}^{k-R} \frac{\Pr\{X_{n-R-j:k-R-j} \leq v\}}{k-R} \right].
\label{eq:eq_MG1efs_Vecdf}
\end{equation}

\begin{proposition}[Approximation by decoupling the queues]
  An $(n, k)$ batch mix approximately behaves to each sender as an $M/G/1/\text{efs}$ queue with regular service times distributed as \eqref{eq:eq_MG1efs_Vcdf} and exceptional first service times distributed as \eqref{eq:eq_MG1efs_Vecdf}.
\end{proposition}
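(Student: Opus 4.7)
The plan is to verify the three structural ingredients of an $M/G/1/\text{efs}$ queue for a single tagged sender's queue: Poisson arrivals, i.i.d.\ regular service times distributed as $V$, and a distinct distribution $V_e$ for the first service in each busy period. The Poisson-arrivals component is immediate from the model definition, so the real work is to identify the batch formation delay with a bona fide service time and to derive its two distributions (regular and exceptional first) under the decoupling approximation already introduced in Sec.~\ref{subsec:subsec_n>2}.

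For the regular service time, I would tag a message $m$ that arrives to a \emph{non-empty} queue. By FIFO, $m$ first waits for its predecessors' batch formations and then undergoes its own; it therefore suffices to show that once $m$ reaches HoL, its remaining wait is distributed as $V$ given in \eqref{eq:eq_MG1efs_Vcdf}. Applying the approximating assumption (each other queue independently non-empty with probability $p$ upon HoL, truncated by $k-1$ total non-empty queues), the number $R$ of non-empty other queues seen by $m$ is $B \mid \{B \le k-1\}$ with $B \sim \mathrm{Binom}(n-1,p)$. Conditioned on $R=r$, the memoryless property of Poisson arrivals makes the residual inter-arrival times on the $n-1-r$ empty queues i.i.d.\ $\mathrm{Exp}(\lambda)$, and the batch completes exactly when the $(k-1-r)$th of these fires. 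Hence the conditional delay is $X_{n-1-r:k-1-r}$, and taking the expectation over $R$ yields \eqref{eq:eq_MG1efs_Vcdf}.

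For the exceptional first service, I would tag a message $m^+$ that arrives to an \emph{empty} queue. Let $m$ be the departure that last emptied the tagged queue; since $m$ triggered a batch, the number of non-empty queues \emph{left behind} after $m$'s departure is again distributed as $R$ by the same decoupling argument. At the moment $m^+$ arrives, $m^+$ contributes one of the $k-R$ arrivals needed among the $n-R$ empty queues to trigger the next batch. By exchangeability of the i.i.d.\ Poisson streams feeding those empty queues, $m^+$ is equally likely to be the $j$th of these $k-R$ arrivals for each $j \in \{1,\dots,k-R\}$. Conditioning on this rank $j$, the residual wait is the time for the remaining $k-R-j$ of the $n-R-j$ still-empty queues to receive an arrival, i.e.\ $X_{n-R-j:k-R-j}$. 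Averaging over $j$ uniformly and then over $R$ gives \eqref{eq:eq_MG1efs_Vecdf}.

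The main obstacle is conceptual rather than computational: the proposition is only approximate because the decoupling assumption cannot be exact, since the true HoL views across queues are correlated through the shared batching events. I would not attempt to prove independence; instead I would flag this as the single approximating hypothesis of the proposition, noting that both \eqref{eq:eq_MG1efs_Vcdf} and \eqref{eq:eq_MG1efs_Vecdf} are exact consequences of it. The accuracy of this approximation is then something to be validated empirically, as is done in Fig.~\ref{fig:plot_ED_sim_vs_approx}, and the free parameter $p$ would be fixed a posteriori by a self-consistency condition (e.g., matching $p$ to the tagged queue's steady-state probability of being non-empty in the resulting $M/G/1/\text{efs}$ model).
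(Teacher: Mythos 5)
Your argument matches the paper's own derivation essentially step for step: the same decoupling assumption giving the truncated binomial $R$, the same order-statistic identification of the batch formation delay for \eqref{eq:eq_MG1efs_Vcdf}, the same uniform-rank argument over the $k-R$ triggering arrivals for \eqref{eq:eq_MG1efs_Vecdf}, and the same a posteriori fixed-point determination of $p$ via the $M/G/1/\text{efs}$ load. Your explicit framing of the decoupling assumption as the single approximating hypothesis, with everything else exact given it, is a fair and accurate reading of what the paper leaves implicit.
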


Approximation requires estimating $p$, for which a natural estimate would be the average load $\rho$ for a queue, which is known for an $M/G/1/\text{efs}$ queue to be \cite{MG1ExceptionalFirstBusy:Welch64}
\[ \rho = \frac{\lambda \E[V_e]}{1 - \lambda\left(\E[V] - \E[V_e]\right)}. \]
Moments of $V$ and $V_e$ depend on $p$, hence on its estimate $\rho$. The equality above can be solved numerically to find a value for $\rho$.
Simulated and approximated values of delay are compared in Fig.~\ref{fig:plot_ED_sim_vs_approx} for a $(40, k)$ batch mix. Despite the strong independence assumptions employed in deriving the approximation, it compares well with the simulations for low values of $k$, which is the practically relevant case since the incurred delay must be kept below a threshold in practice.

\section{Sampling mix}
\label{sec:sec_samplingmix}
A sampling mix also implements an $(n, k)$ model; mix buffers the messages from each of the $n$ connected senders in a separate FIFO queue and each sender communicates with a disjoint set of at most $m$ receivers. However, buffered messages are forwarded differently compared to batch mix; as soon as a message arrives to the mix, $k$ queues are randomly selected and released. Releasing a queue allows it to forward a message if it is non-empty.
Specifically with probability $p_a$, the queue that receives the arrival is selected together with $k-1$ queues chosen uniformly at random from the remaining $n-1$ queues, or with probability $1-p_a$, the queue that receives the arrival is skipped and $k$ queues are selected uniformly at random from the rest of the queues.

\begin{theorem}
  \underline{Average load} of a queue in an $(n, k)$ sampling mix is given as 
  \[
  \rho = (1 - p_a)/(k - p_a)
  \]
  and the \underline{average delay} experienced by a message is given as 
  \[
  (1 - p_a)/\lambda(k-1).
  \]
\end{theorem}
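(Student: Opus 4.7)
The plan is to focus on a single tagged queue and show that its length evolves as an $M/M/1$ birth--death chain; then $\rho$ and the mean delay $\bar W$ fall out of textbook formulas. The key fact I would lean on is that the random $k$-subset released on each arrival is drawn independently of all queue lengths, so the marginal dynamics of the tagged queue are fully specified by two state-independent event rates.

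First I would catalogue how each arrival changes the tagged queue's length. An arrival lands at the tagged queue at rate $\lambda$: with probability $p_a$ that queue is selected and the arriving message is instantaneously dispatched, leaving the length unchanged; with probability $1-p_a$ the queue is skipped and the length increases by one. An arrival at one of the other $n-1$ queues occurs at rate $(n-1)\lambda$ and selects the tagged queue for release with probability $p_a(k-1)/(n-1) + (1-p_a)k/(n-1) = (k-p_a)/(n-1)$, combining the two cases of whether the arriving queue itself is included in the $k$-subset. This is a pure death when the tagged queue is non-empty and a null event otherwise. So the tagged queue's length is a birth--death chain with state-independent birth rate $\lambda(1-p_a)$ and state-independent death rate $\lambda(k-p_a)$ on states $\ell\geq 1$.

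From here I would read off $\rho = \lambda(1-p_a)/[\lambda(k-p_a)] = (1-p_a)/(k-p_a)$ as the $M/M/1$ load, noting that $k\geq 2$ gives $\rho<1$ and hence stability. The standard $M/M/1$ formula $\bar L = \rho/(1-\rho)$ simplifies to $(1-p_a)/(k-1)$, and Little's law applied to the tagged queue with total arrival rate $\lambda$ yields $\bar W = \bar L/\lambda = (1-p_a)/[\lambda(k-1)]$. The messages experiencing zero delay (the self-dispatches in the first case) still count toward the $\lambda$ in Little's law, but they contribute zero to the sum of sojourn times and thus do not distort the arithmetic mean.

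The main obstacle I anticipate is justifying that the marginal length of a single queue is itself Markovian, since the $n$ queue-length processes are coupled: a single release event can shorten several queues simultaneously. What rescues the argument is that the selection mechanism depends only on which queue receives the arrival (together with an independent coin flip and an independent uniform random subset), not on any queue length, so the transition epochs and rates for the tagged queue are generated by an external Poisson process whose marks are independent of its state. Verifying this independence carefully---and checking that the ``add then immediately remove'' transition in the first case does not perturb the stationary distribution---is where I would spend the most care.
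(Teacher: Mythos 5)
Your proposal takes essentially the same route as the paper: the paper also models the tagged queue's length as a birth--death chain with birth rate $\lambda(1-p_a)$ and death rate $(n-1)\lambda p_o = \lambda(k-p_a)$ where $p_o = p_a(k-1)/(n-1) + (1-p_a)k/(n-1)$, then solves for the stationary distribution and applies Little's law. Your additional care in justifying why the marginal queue-length process is Markovian (the release subset depends only on the arrival location and external randomness, not on queue lengths) is a point the paper glosses over, but it does not change the argument.
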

\begin{proof}
  The length of a particular queue in the mix defines a birth-death process with a state space of non-negative integers and transition rates given for $i \geq 0$ as
  \[ \Pr\{i \to i+1\} = \lambda(1 - p_a), ~~ \Pr\{i+1 \to i\} = (n-1)\lambda p_o, \]
  where $p_o = p_a(k-1)/(n-1) + (1-p_a)k/(n-1)$.
  Stationary state probabilities are easily derived, using which average length of a queue is found, then Little's law is applied.
\end{proof}

As shown in Fig.~\ref{fig:plot_ED_sim_vs_approx}, average delay of a sampling mix scales much better with $k$ (i.e., decays as $1/(k-1)$) compared to a batch mix (i.e., grows exponentially in $k$ beyond a value).
However, a sampling mix cannot provide a well-defined anonymity guarantee while a batch mix can (see Thm.~\ref{thm_batchmix_anonymity}).


\begin{theorem}
  All receivers of a target sender connected to a sampling mix can be identified with intersection attacks by an adversary that can infer the state of the mix.
\label{thm_samplingmix_statefulattack}
\end{theorem}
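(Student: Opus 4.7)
The plan is to exploit the state-inference capability to pick off the target's messages one at a time, reducing the intersection attack to a trivial one in which each ``recipient set'' has size one. Because the adversary knows the length of every queue at every moment, after each release they can deduce exactly which queues dispatched --- those whose backlog decreased by one --- and, in particular, flag every release in which the target's queue dispatched together with the full set of receivers appearing in it.

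I would focus on releases whose outgoing batch has size exactly one and whose sole dispatching queue is the target's. In such an event the single outgoing receiver is conclusively one of the target's $m$ recipients, so no multi-set intersection is required --- the ``intersection attack'' collapses to a union of singletons, each of which is provably a receiver of the target. The core step is then to establish that such singleton target releases occur infinitely often almost surely: whenever the randomly sampled $k$-subset contains the target's (non-empty) queue but is disjoint from the set of all other currently non-empty queues, the resulting release is a singleton from the target. Using that the target's queue has stationary non-empty probability $\rho > 0$ (from the previous theorem) and that, by stability of the sampling mix, states in which few other queues are non-empty are themselves visited a positive fraction of the time, a standard ergodicity argument gives that singleton target releases recur at a positive rate in steady state.

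To conclude that all $m$ receivers are eventually exposed, I would use that the target communicates with each of its $m$ recipients, so each recipient appears at the head of the target's queue infinitely often. A coupon-collector-style argument then shows that every one of the $m$ receivers is the sole outgoing receiver of some singleton target release in finite time almost surely. The main obstacle is formalizing the positive-rate claim for singleton releases, which requires combining the explicit single-queue stationary distribution used in the previous theorem with ergodicity of the joint Markov process on the $n$ queues; the rest of the argument is routine.
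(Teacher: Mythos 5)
Your proposal is correct and follows essentially the same route as the paper: both attacks use state knowledge to isolate release events in which a departure is unambiguously attributable to the target's queue (the paper waits for the whole mix to empty and exploits the probability-$p_a$ self-inclusion; you more generally flag any release whose sampled set meets no other non-empty queue), and both then invoke recurrence of such events under stability to expose all $m$ receivers. Your version is slightly cleaner in that the singleton-release event has positive probability for every $p_a$, so you do not need the paper's separate exclusion argument for the degenerate case.
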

\begin{proof}
  Queues in the mix will empty out infinite number of times under stability. Suppose that the adversary can detect whenever the mix becomes empty.
  Firstly, assume $p \neq 0$. Given that a message from a target finds the mix empty, the arriving message will be forwarded with probability $p$ or no message will depart. If the message is immediately forwarded, a receiver of the target will revealed.
  Number of times repeating this attack required to identify a receiver is geometric with $p$, hence attack will be almost surely successful in finite time.

  Secondly, assume $p \neq 0$. Given that a message from a target finds more than one non-empty queue in the mix, the following departure may include messages going only to the receivers of the non-target senders. This reveals which receiver does not belong to the recipient set of the target. Eventually adversary will be left with the correct set of receivers.
\end{proof}


Sampling mix will empty out more frequently and cannot often build a state complex enough to hide the origin of the outgoing messages when $k$ is larger and/or arrival rate $\lambda$ is lower, hence intersection attacks with state knowledge will resolve faster.
Moreover, even simple intersection attacks that do not require state knowledge can deanonymize a target connected to a sampling mix if $p_a$ is not chosen carefully.


\begin{theorem}
  All receivers of a target sender connected to an $(n, k)$ sampling mix with $p_a \neq 1/n$ can be identified with intersection attacks that do not require state knowledge.
\label{thm_samplingmix_statelessattack}
\end{theorem}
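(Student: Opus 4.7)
The plan is to construct a passive frequency-based intersection attack that exploits how often each candidate receiver appears in the batch emitted immediately after a target arrival. I will show that this rate takes one of exactly two values depending on whether the receiver is or is not one of the target's, and that these two values coincide precisely when $p_a=1/n$; thus whenever $p_a\neq 1/n$ the adversary separates the two classes by simple counting, without ever inspecting the mix state. The setup is to condition on the epochs at which the target sends a message: because the $n$ Poisson streams are mutually independent, PASTA applied to the target stream guarantees that at each such epoch every other queue is in its stationary distribution, so each non-target queue is non-empty with probability $\rho=(1-p_a)/(k-p_a)$ by the preceding theorem. I would also use the standing assumption that each sender addresses its $m$ receivers uniformly, so that the head-of-line message of any non-empty queue is destined for a given one of its receivers with probability $1/m$ independently of the sampling coin flips.

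Next I would compute the probability $q_r$ that a candidate receiver $r$ appears in the batch triggered by a target arrival. If $r$ is one of the target's receivers then, since recipient sets are disjoint, only the target's own queue can emit to $r$; that queue is sampled with probability $p_a$ and its HoL is for $r$ with probability $1/m$, giving $q_r=p_a/m$. If $r$ instead belongs to some sender $s'\neq\text{target}$, then $s'$'s queue must be sampled, which happens with probability $p_a(k-1)/(n-1)+(1-p_a)k/(n-1)=(k-p_a)/(n-1)$, and must be non-empty (probability $\rho$) with HoL for $r$ (probability $1/m$), yielding
\[
q_r=\frac{k-p_a}{n-1}\cdot\frac{1-p_a}{k-p_a}\cdot\frac{1}{m}=\frac{1-p_a}{(n-1)m}.
\]
Solving $p_a/m=(1-p_a)/[(n-1)m]$ gives $p_a n=1$, so the two values are distinct iff $p_a\neq 1/n$.

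To finish, I would have the adversary record, over $N$ observed target arrivals, the empirical fraction of batches containing each receiver; by the strong law of large numbers these fractions converge almost surely to either $p_a/m$ or $(1-p_a)/[(n-1)m]$, and the target's receivers are exactly those in the former group, which simultaneously identifies all $m$ of them. The main obstacle I expect is a clean justification of the conditional independence claims: that the state of each non-target queue at a target-arrival instant is drawn from the stationary distribution (handled by independence of the Poisson streams and PASTA) and that the HoL destination is independent of sampling and of the queue length (which follows from each message being labeled with its destination independently of its arrival time and of the mix's coin flips).
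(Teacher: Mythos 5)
Your proposal is correct and follows essentially the same route as the paper: both compare the per-target-arrival departure frequency from the target's own queue ($p_a$, or $p_a/m$ per receiver) against that from any other queue ($q=p_o\rho=(1-p_a)/(n-1)$, or $(1-p_a)/[(n-1)m]$ per receiver), observe these coincide exactly when $p_a=1/n$, and invoke the law of large numbers to separate the two classes. Your treatment is somewhat more careful than the paper's --- you handle general $m$ explicitly via a uniform-addressing assumption and justify the stationarity of the non-target queues at target-arrival epochs via PASTA, where the paper proves only $m=1$ and asserts the extension --- but the key lemma and the counting attack are the same.
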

\begin{proof}
  Once a message arrives to a queue in steady state, probability of a departure from any other queue is $q = p_o\rho = (1-p_a)/(n-1)$.
  Suppose $m=1$ and $p_a > q$ ($p_a < q$). Adversary can record the message departures per arrival from a target sender.
  By the law of large numbers, the greatest (smallest) number of departures will almost surely be observed on the correct receiver in the limit.
  Same idea applies when each sender communicates with multiple receivers.
  Finally, $p_a = q$ if and only if $p_a = 1/n$.
  
  In other words, in order to preserve anonymity, it is necessary to maximize  the uncertainty within the steady state probabilities of message departures from the queues.
  R\'enyi entropy is commonly used for measuring uncertainty and uniform distribution is known to maximize it, which is achieved by setting $p_a = 1/n$.
\end{proof}

\section{Conclusion}
\label{sec:sec_conclusion}
We proposed a queueing model for batch anonymity mixes and showed that batch a mix with a deterministic message dispatching policy ensures that no sender-receiver pair is exposed (referred to as anonymity in this paper) under intersection attacks. On the other hand, its incurred delay on message transfer grows quickly as the batch size gets close to the number of connected senders.
We introduced a sampling mix model that implements a randomized message dispatching policy. Sampling mix permits an exact delay analysis, which allowed us to show that randomization allows cutting the tail of delay immensely, however, at the cost of giving up on the anonymity guarantee implemented by its deterministic counterpart.
We hope to next use our proposed queueing model to understand the performance of mixes in terms of the time-to-deanonymize metric vs.\ the incurred message transfer delay.



\section*{Acknowledgments}
This research is based upon work supported by the National Science Foundation 
under Grant No.~SaTC-1816404.

\bibliographystyle{IEEEtran}
\bibliography{references}

\end{document}